\newcommand{\dunw}{d_{\mathrm{unw}}}
\newcommand{\eps}{\varepsilon}
\newtheorem{definition}{Definition}
\newtheorem{theorem}{Theorem}
\newtheorem{corollary}{Corollary}
\title{On $\eps$-Nets, Distance Oracles, \\ and Metric Embeddings}
\author{Ilya Razenshteyn \footnote{Mathematics Department, Moscow State University, e-mail: \texttt{ilyaraz@gmail.com}}}
\date{}
\begin{document}
    \maketitle
    \begin{abstract}
        We give two new applications of an observation from~\cite{ADFGW11}.
        The first is an almost linear sized constant time data structure for reporting very
        large distances in undirected graphs.
        The second is a generic transformation of results about $\ell_1$-embeddability of metrics to a setting,
        where we are interested in preservation of large distances only.
    \end{abstract}
    \section{Introduction}

    Let $G$ be a weighted undirected graph with unique shortest paths (with non-negative weights).
    In~\cite{ADFGW11} the following observation is heavily used: the set of all shortest paths of $G$ (by shortest path we mean its set of vertices)
    has VC-dimension~\cite{VC71} at most two.
    We give two new applications of this fact.

    \subsection{Distance oracles}

    The first application is about distance oracles.
    Let $G = (V, E)$ be an undirected unweighted graph with $n$ vertices.
    For vertices $v_1, v_2 \in V$ let $d(v_1, v_2)$ be the distance between $v_1$ and $v_2$.
    Let $\eps, \delta$ be two fixed positive constants.
    In Section~\ref{data_structure}
    we build a data structure of size $O(n^{1 + \delta})$ that given two vertices $v_1, v_2 \in V$ in constant time reports the following:
    \begin{itemize}
        \item ``$\perp$'', if $d(v_1, v_2) < \eps n$;
        \item $d(v_1, v_2)$, if $d(v_1, v_2) \geq \eps n$.
    \end{itemize}

    \subsection{Metric embeddings}

    The second application is about metric embeddings.
    Let $G = (V, E, w)$ be an undirected weighted graph with $n$ vertices.
    For vertices $v_1, v_2 \in V$ let $d(v_1, v_2)$ be the distance between $v_1$ and $v_2$ with respect to $w$,
    and $\dunw(v_1, v_2)$ be the distance with respect to unit weights.

    Suppose we want to approximate metric $d$ with $\ell_1$-norm. We are looking for mappings from $V$ to $\ell_1$ with small distortion. 
    \begin{definition}
        Let us say that a mapping $f \colon V \to (\mathbb{R}^k, \ell_1)$ has distortion $D$ if for every $v_1, v_2 \in V$
        $$
            d(v_1, v_2) \leq \|f(v_1) - f(v_2)\|_1 \leq D \cdot d(v_1, v_2).
        $$
    \end{definition}

    There are many results about $\ell_1$-embeddability. Let us state some of them.

    \begin{definition}
        Doubling dimension of $d$ is the minimum integer $k$ such that every subset $V' \subseteq V$ of diameter $\Delta$
        can be covered with $2^k$ subsets of diameter $\Delta / 2$.
    \end{definition}

    \begin{definition}
        Let us say that $d$ is of negative type if $\sqrt{d}$ is isometrically embeddable into $\ell_2$.
    \end{definition}

    \begin{theorem}
        \label{l1_embeddings}
        There are the following upper-bounds on $D$:
        \begin{itemize}
            \item \cite{B85} For any $d$ one can take $D = O(\log n)$;
            \item \cite{R99} If $G$ is $H$-minor free, then $D = O(\sqrt{\log n})$; 
            \item \cite{GKL03} If $d$ has bounded doubling dimension, then $D = O(\sqrt{\log n})$;
            \item \cite{ALN05} If $d$ is of negative type, then $D = O(\sqrt{\log n} \log \log n)$. 
        \end{itemize}
    \end{theorem}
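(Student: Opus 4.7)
Theorem~\ref{l1_embeddings} is not a new result but a compilation of four well-known $\ell_1$-embedding bounds from the literature, so my ``proof'' would consist of pointing to the cited references rather than reproducing the arguments. Each bullet is a self-contained deep theorem in its own right. Below I just sketch the flavor of each construction, since a serious proof would take many pages and is not what this survey-style statement is asking for.

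For the first bullet (Bourgain), the standard approach is the Fr\'echet-style random-set embedding. One picks $O(\log n)$ dyadic scales $j = 1, 2, \dots, \lceil \log_2 n \rceil$, and at scale $j$ samples $O(\log n)$ independent random subsets $S \subseteq V$, each vertex being included with probability $2^{-j}$. The embedding coordinate indexed by $(j, S)$ is $f(v)_{(j,S)} = d(v, S)$ (suitably normalized). Non-expansion is immediate because each coordinate is $1$-Lipschitz; the distortion lower bound follows from a level-by-level argument showing that for any pair $v_1, v_2$ there is some scale at which the random set separates their neighborhoods with constant probability.

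For the second and third bullets (Rao; GKL) the common engine is the theory of \emph{padded decompositions}: a distribution over partitions of $V$ into clusters of diameter at most $\Delta$ in which each point's ball of radius $\Delta/\alpha$ lies inside its cluster with constant probability, for $\alpha$ as small as possible. Rao establishes that $H$-minor-free graphs admit such decompositions with $\alpha = O(\sqrt{\log n})$, and GKL obtain the same for metrics of bounded doubling dimension. Plugging a padded decomposition into the standard measured-descent/Rao-style construction and summing over all dyadic scales yields an $\ell_1$-embedding with distortion $O(\sqrt{\log n})$. The fourth bullet (ALN) additionally exploits the Hilbert-space structure guaranteed by negative type, combined with a delicate Poincar\'e-type inequality inside the measured-descent scheme, and loses only an extra $\log \log n$ factor.

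The genuinely hard part in each of these proofs is the construction of the padded decomposition with the claimed $\alpha$; once such a decomposition is available, the assembly into an $\ell_1$-embedding is a routine Fr\'echet-summation exercise. Since these constructions are exactly the technical cores of the papers cited, I would in the write-up simply invoke them as black boxes, which is what the theorem statement already does implicitly by listing the references.
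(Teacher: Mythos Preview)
Your proposal is correct and matches the paper exactly: the paper gives no proof of Theorem~\ref{l1_embeddings} at all, simply citing \cite{B85,R99,GKL03,ALN05} and treating the four bounds as black boxes, which is precisely what you propose to do. One small slip in your informal sketch worth noting: for the Rao and GKL results the padding parameter $\alpha$ is a constant (depending only on the excluded minor or the doubling dimension), and the $\sqrt{\log n}$ arises from summing over the $O(\log n)$ dyadic scales in the Rao-style $\ell_2$ assembly, not from $\alpha$ itself.
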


    But what if we are interested in preserving $d(v_1, v_2)$ only if $\dunw(v_1, v_2) \geq \eps n$?
    In Section~\ref{metric_embeddings} we state and prove a generic black box transformation
    that allows us to replace all occurences of $n$ in Theorem~\ref{l1_embeddings} with $1 / \eps$ for this case.

    In~\cite{ABCDGKNS05} somewhat stronger result about arbitrary, $H$-minor free, and bounded doubling dimension metrics is proved, but since our transformation
    is black box, it automatically holds for negative type metrics.
    \section{VC-dimension and $\eps$-nets}

    Let $X$ be a finite set. Let $R \subseteq 2^X$ be a system of subsets of $X$.
    The following definitions were given in~\cite{VC71}.

    \begin{definition}
        Say that $U \subseteq X$ is \emph{shattered} by $R$ if for every $U' \subseteq U$ there exists $U'' \in R$ such that $U'' \cap U = U'$.
    \end{definition}
    \begin{definition}
        \emph{VC-dimension} of the pair $(X, R)$ is the size of the largest subset of $X$, which is shattered by $R$.
    \end{definition}

    Consider the following set system: let $G = (V, E, w)$ be a weighted undirected graph with unique shortest paths.
    Let $X = V$, and $P \in R$ iff $P$ is a shortest path in $G$. The following theorem was proved in~\cite{ADFGW11}. We reprove it here for completeness.

    \begin{theorem}
        \label{vc_shortest}
        VC-dimension of $(X, R)$ is at most two.
    \end{theorem}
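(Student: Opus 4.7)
The plan is to argue by contradiction. Suppose some three-element subset $\{a,b,c\} \subseteq V$ is shattered by $R$. Then in particular there is a shortest path $P \in R$ with $\{a,b,c\} \subseteq P$, and there are shortest paths $P_{ac}, P_{ab}, P_{bc} \in R$ realizing the two-element traces $\{a,c\}, \{a,b\}, \{b,c\}$ on $\{a,b,c\}$ respectively. In the total order induced by $P$ on its vertices, one of $a,b,c$ lies strictly between the other two; after relabeling, assume $b$ lies between $a$ and $c$ along $P$.

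The key step is then to exploit uniqueness of shortest paths. The subpath of $P$ from $a$ to $c$ is itself a shortest $a$-to-$c$ path (this is the standard "subpaths of shortest paths are shortest paths" fact, which uses only non-negativity of the weights), and by the uniqueness assumption it is \emph{the} shortest $a$-to-$c$ path; in particular it contains $b$. I would then apply the same observation to $P_{ac}$: its subpath between $a$ and $c$ is also a shortest $a$-to-$c$ path, so by uniqueness it coincides with the subpath of $P$ between $a$ and $c$, and therefore contains $b$. This contradicts $b \notin P_{ac}$, completing the proof.

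The main thing to be careful about is the definition: the excerpt identifies a shortest path with its \emph{vertex set}, so "$b \in P_{ac}$" really means $b$ is among the vertices of the path, which is exactly what the uniqueness argument delivers. Beyond that, no step is really an obstacle; the proof uses nothing more than the transitivity of optimality along a shortest path plus a single invocation of uniqueness, so the argument should fit in a few lines. Note also that the two-element and singleton subsets are irrelevant to the contradiction (any single vertex or two-vertex subset of a path's vertex set can be traced), so the whole weight of the argument falls on the pair $\{a,c\}$ versus the full set $\{a,b,c\}$.
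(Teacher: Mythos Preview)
Your argument is correct and follows essentially the same route as the paper: assume a three-point set is shattered, take a shortest path $P$ through all three, order them along $P$, and use uniqueness of shortest paths to see that any shortest path through the two outer points must contain the middle one. The paper states this last implication in one sentence, whereas you (helpfully) spell out the subpath-optimality step that justifies it.
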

    \begin{proof}
        Let $\{v_1, v_2, v_3\} \subseteq V$. Suppose it is shattered by shortest paths in $G$.
        Then there exists a shortest path $P$ that contains $v_1$, $v_2$, and $v_3$ simultaneously.
        W.l.o.g. we can assume that $v_1$, $v_2$, and $v_3$ appear in $P$ exactly in the same order.
        But then, since shortest paths in $G$ are unique, there is no shortest path that contains $v_1$ and $v_3$, but does not contain $v_2$.
    \end{proof}

    If VC-dimension of $(X, R)$ is small, then there exist small the so-called $\eps$-nets.

    \begin{definition}
        Let $\mu$ be a probabilistic measure over $X$, and $\eps$ be a fixed positive constant.
        Say that $N \subseteq X$ is an $\eps$-net for $(X, \mu, R)$ if for every $U \in R$ such that $\mu(U) \geq \eps$ the intersection of $N$ and $U$
        is not empty.
    \end{definition}

    The following theorem was proved in~\cite{HW86}.

    \begin{theorem}
        \label{eps_nets}
        If VC-dimension of $(X, R)$ is at most $d$, then for every $\eps > 0$ and every $\mu$ there exists an $\eps$-net of $(X, \mu, R)$ of size 
        $O(d \cdot \log(1 / \eps) / \eps)$.
    \end{theorem}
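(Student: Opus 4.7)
The plan is to follow the classical double-sampling argument of Haussler and Welzl. I would fix $m = C d \log(1/\eps)/\eps$ for a large absolute constant $C$ to be determined, draw two independent samples $N_1, N_2 \in X^m$ according to $\mu$, and let $A$ be the event ``$N_1$ is not an $\eps$-net''. The goal is to show $\Pr[A] < 1$, after which the existence of an $\eps$-net of size at most $m$ follows immediately. To this end I would introduce an auxiliary event $B$: there exists $U \in R$ with $\mu(U) \geq \eps$, $N_1 \cap U = \emptyset$, and $|N_2 \cap U| \geq \eps m / 2$.

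The first step is a routine reduction $\Pr[A] \leq 2 \Pr[B]$. Conditioned on the existence of a heavy $U$ missed by $N_1$, the count $|N_2 \cap U|$ is binomial with mean $\eps m \gg 1$, so by a Chernoff bound it exceeds $\eps m / 2$ with probability at least $1/2$. Summing (or rather, choosing the witness $U$ measurably from $N_1$) transfers this to the statement about events.

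The heart of the proof is bounding $\Pr[B]$ by symmetrization. I would condition on the multiset $S = N_1 \cup N_2$ of size $2m$; conditionally on $S$, the pair $(N_1, N_2)$ is a uniformly random ordered partition of $S$ into two halves of size $m$. For any fixed $U \in R$ with $|U \cap S| \geq \eps m / 2$, the probability that every element of $U \cap S$ lands in $N_2$ is at most $2^{-\eps m / 2}$. Since the VC-dimension of $(X, R)$ is at most $d$, the Sauer--Shelah lemma bounds the number of distinct traces $\{U \cap S : U \in R\}$ by $O((2m)^d)$. A union bound then yields $\Pr[B \mid S] \leq O((2m)^d) \cdot 2^{-\eps m / 2}$, and choosing $C$ sufficiently large makes this strictly less than $1/2$ uniformly in $S$.

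The main obstacle I expect is the symmetrization step: one has to justify carefully that after conditioning on the unordered union $S$ the two halves are genuinely exchangeable, and to invoke the Sauer--Shelah lemma, which is what converts the VC-dimension bound into a polynomial bound on the number of distinct subsets induced by $R$ on $S$. Once this is in place, putting the three ingredients together is a direct computation: $\Pr[A] \leq 2\Pr[B] < 1$, yielding an $\eps$-net of the claimed size $O(d \log(1/\eps)/\eps)$.
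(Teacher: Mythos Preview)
Your sketch is the classical Haussler--Welzl double-sampling argument and is correct as outlined. Note, however, that the paper does not give its own proof of this theorem: it is quoted from~\cite{HW86} and used as a black box, so there is nothing in the paper to compare against --- you have simply reproduced the proof from the cited reference.
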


    By combining Theorem~\ref{vc_shortest} and Theorem~\ref{eps_nets} we obtain the main ingredient of two our results.

    \begin{theorem}
        \label{hitting_set}
        Let $G = (V, E, w)$ be a weighted undirected graph with $n$ vertices.
        Let $d$ be a shortest-path metric on $V$ with respect to $w$, and $\dunw$~--- with repect to unit weights.

        Then for every $\eps > 0$ there exists a subset $U \subseteq V$ of size $O(\log(1 / \eps) / \eps)$
        such that for every $v_1, v_2 \in V$ such that $\dunw(v_1, v_2) \geq \eps n$
        there exists $u \in U$ such that $d(v_1, v_2) = d(v_1, u) + d(u, v_2)$.
    \end{theorem}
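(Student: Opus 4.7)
The plan is to apply Theorem~\ref{eps_nets} to the shortest-path set system of Theorem~\ref{vc_shortest} under the uniform measure on $V$, but first I need to address the uniqueness hypothesis built into Theorem~\ref{vc_shortest}. The idea is to perturb the weights $w$ infinitesimally to new weights $w'$ so that (a) shortest paths with respect to $w'$ are unique, and (b) every shortest path with respect to $w'$ is still a shortest path with respect to $w$. Concretely, pick generic positive reals $r(e)$ and set $w'(e) = w(e) + \eta \cdot r(e)$. For $\eta$ small enough compared to the minimum positive gap between lengths of distinct simple paths measured in $w$, no non-$w$-shortest path can become shortest in $w'$, while genericity of $r$ forces uniqueness in $w'$.

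Next, I would apply Theorem~\ref{vc_shortest} to $(V, E, w')$ to get a set system $R$ of (unique) shortest paths with VC-dimension at most $2$, and Theorem~\ref{eps_nets} with $\mu$ the uniform measure on $V$ to obtain $U \subseteq V$ of size $O(\log(1/\eps)/\eps)$ that intersects every $P \in R$ with $\mu(P) \geq \eps$, i.e.\ every $w'$-shortest path of cardinality at least $\eps n$.

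Now fix any $v_1, v_2$ with $\dunw(v_1, v_2) \geq \eps n$. Any path between $v_1$ and $v_2$ uses at least $\eps n$ edges, hence at least $\eps n + 1$ distinct vertices, so the unique $w'$-shortest path $P$ from $v_1$ to $v_2$ satisfies $\mu(P) \geq \eps$ and therefore meets $U$ at some vertex $u$. By the choice of perturbation, $P$ is also a $w$-shortest path, and any subpath of a $w$-shortest path is itself $w$-shortest, so $d(v_1, u) + d(u, v_2)$ equals the $w$-length of $P$, which is $d(v_1, v_2)$.

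The main (and really only) obstacle is the initial perturbation: Theorem~\ref{vc_shortest} is stated only under unique shortest paths, so one must justify that reducing to that case does not lose anything for the purposes of this theorem. Everything after that is a direct plug-in of Theorems~\ref{vc_shortest} and \ref{eps_nets}, together with the elementary observation relating $\dunw$-distance to vertex count.
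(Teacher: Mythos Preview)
Your proposal is correct and follows exactly the paper's approach: perturb weights to obtain unique shortest paths, then apply Theorem~\ref{vc_shortest} and Theorem~\ref{eps_nets} with the uniform measure. The paper's proof is a two-line sketch of precisely this argument; you have simply filled in the details (the form of the perturbation, why the $w'$-shortest path is still $w$-shortest, and why $\dunw(v_1,v_2)\ge \eps n$ forces $\mu(P)\ge \eps$).
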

    \begin{proof}
        By perturbing weights, we can assume that shortest paths in $G$ are unique. Then we just apply Theorem~\ref{eps_nets} for uniform $\mu$
        and $d = 2$ (by Theorem~\ref{vc_shortest}).
    \end{proof}

    It follows from~\cite{A10} that one can not improve the bound on the size of $U$ to $O(1 / \eps)$, but if $G$ is $H$-minor
    free, then it follows from~\cite{KPR93} that there exists a desired $U$ of size $O(1 / \eps)$.
    \section{Distance oracles}
    \label{data_structure}

    Let $G = (V, E)$ be an undirected unweighted graph. In this section we show how to build a small and effective distance oracle 
    that reports large distances in $G$.

    \begin{theorem}
        \label{simple_oracle}
        For every $\eps > 0$
        there exists an oracle of size $O(n)$ that given two vertices $v_1, v_2 \in V$ reports in constant time
        a number $\tau$ such that if $d(v_1, v_2) \geq \eps n$, then
        $\tau = d(v_1, v_2)$.
    \end{theorem}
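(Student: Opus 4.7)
The plan is to apply Theorem~\ref{hitting_set} to the unweighted graph $G$ (so $d = \dunw$) with parameter $\eps$, obtaining a subset $U \subseteq V$ of size $O(\log(1/\eps)/\eps)$ such that any pair $v_1,v_2$ with $d(v_1,v_2) \geq \eps n$ has some $u \in U$ lying on a shortest $v_1$-$v_2$ path, i.e.\ $d(v_1,v_2) = d(v_1,u) + d(u,v_2)$.

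The data structure is then the obvious one: for each $u \in U$, run a BFS from $u$ and store a length-$n$ array holding $d(u,v)$ for every $v \in V$, indexed so that a lookup $d(u,v)$ takes constant time. Since $\eps$ is a fixed positive constant, $|U| = O(\log(1/\eps)/\eps) = O(1)$, so the total space is $|U| \cdot n = O(n)$. On a query $(v_1,v_2)$, the oracle returns
$$
    \tau = \min_{u \in U} \bigl(d(v_1,u) + d(u,v_2)\bigr),
$$
which is a minimum over $O(1)$ precomputed values and hence computable in constant time.

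For correctness, the triangle inequality gives $d(v_1,u) + d(u,v_2) \geq d(v_1,v_2)$ for every $u$, so $\tau \geq d(v_1,v_2)$ always. When $d(v_1,v_2) \geq \eps n$, Theorem~\ref{hitting_set} supplies a $u \in U$ achieving equality, so $\tau = d(v_1,v_2)$. When $d(v_1,v_2) < \eps n$ we make no guarantee on $\tau$ beyond $\tau \geq d(v_1,v_2)$, which matches what the statement requires.

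There is no real obstacle here once Theorem~\ref{hitting_set} is in hand; the only point worth flagging is that the theorem is existential, so for an actual construction one either finds $U$ by the standard random-sampling proof of Theorem~\ref{eps_nets} (which succeeds with high probability) or by a deterministic $\eps$-net construction. Since the statement concerns only the size and query time of the resulting oracle, the choice of construction method does not matter for the proof.
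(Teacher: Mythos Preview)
Your proof is correct and follows essentially the same approach as the paper: take the hitting set $U$ from Theorem~\ref{hitting_set}, store all distances from $U$ to $V$, and answer a query by $\tau = \min_{u \in U}(d(v_1,u)+d(u,v_2))$. You supply a bit more detail (the triangle-inequality lower bound, the observation that $|U|=O(1)$ for fixed $\eps$, and the remark on constructing $U$), but the argument is the same.
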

    \begin{proof}
        Let $U \subseteq V$ be a set from Theorem~\ref{hitting_set}.
        Let us store for every vertex $v$ distances to all elements of $U$.
        Then if we are given a query $(v_1, v_2)$, we report $\tau := \min_{u \in U} d(v_1, u) + d(u, v_2)$. By definition of $U$, if $d(v_1, v_2) \geq \eps n$,
        then $\tau = d(v_1, v_2)$.
    \end{proof}

    Since we want our oracle to distinguish the cases $d(v_1, v_2) < \eps n$ and $d(v_1, v_2) \geq \eps n$, we combine Theorem~\ref{simple_oracle}
    with Thorup-Zwick oracle~\cite{TZ05}.

    \begin{theorem}[\cite{TZ05}]
        \label{thorup_zwick}
        For every positive integer $k$ there exists a data structure of size $O(n^{1 + 1/k})$ that given two vertices $v_1, v_2 \in V$ reports
        in constant time a number $D$ such that $d(v_1, v_2) \leq D \leq (2k - 1) \cdot d(v_1, v_2)$.
    \end{theorem}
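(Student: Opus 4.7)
\medskip
\noindent\textbf{Proof plan.} My plan is to use the sampling hierarchy of Thorup and Zwick. I would build a chain of random subsets $A_0 = V \supseteq A_1 \supseteq \cdots \supseteq A_k = \emptyset$, where each $A_i$ is obtained from $A_{i-1}$ by keeping every vertex independently with probability $n^{-1/k}$. For every $v \in V$ and every level $i$, I precompute the closest pivot $p_i(v) \in A_i$ and the distance $\delta_i(v) = d(v, p_i(v))$ via $k$ multi-source shortest-path computations, one per level. Then I form the bunch
\[
B(v) \;=\; \bigcup_{i=0}^{k-1} \bigl\{ w \in A_i \setminus A_{i+1} : d(v,w) < \delta_{i+1}(v) \bigr\}
\]
and store the pair $(w, d(v,w))$ for every $w \in B(v)$ in a perfect hash table keyed by $w$, as well as each $p_i(v)$ and $\delta_i(v)$.

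The size bound follows from a standard geometric-tail estimate: at each level the contribution to $B(v)$ is stochastically dominated by a geometric random variable with success probability $n^{-1/k}$, so $\mathbb{E}\,|B(v)| = O(k n^{1/k})$ and the total expected storage is $O(k n^{1+1/k})$, which is $O(n^{1+1/k})$ for fixed $k$. For a query $(v_1, v_2)$ I would run the standard Thorup--Zwick walk: initialise $i := 0$, $w := p_0(v_1)$; while $w \notin B(v_2)$, swap $v_1 \leftrightarrow v_2$, increment $i$, and reset $w := p_i(v_1)$; on exit, return $D := d(v_1, w) + d(w, v_2)$. Each membership test is $O(1)$ via the hash table, the stored bunch distances supply the two terms in $O(1)$, and the loop runs at most $k$ times because $A_k = \emptyset$ forces termination, so the query takes $O(1)$ time for fixed $k$. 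The lower bound $D \ge d(v_1, v_2)$ is immediate from the triangle inequality since $D$ is the length of a genuine $v_1$-to-$v_2$ walk.

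The main obstacle is the stretch bound $D \le (2k-1) d(v_1, v_2)$. The argument is an induction on the loop index $i$ showing that the current pivot $w = p_i(\cdot)$ sits within $i \cdot d(v_1, v_2)$ of the active vertex, using that the previous pivot still belongs to $A_{i-1} \supseteq A_i$ and hence gives an upper bound on $\delta_i$ at the next vertex via the triangle inequality. At the terminating index $j \le k-1$, the membership $w \in B(v_2)$ forces $d(v_2, w) < \delta_{j+1}(v_2) \le \delta_{j+1}(v_1) + d(v_1, v_2) \le (j+1) d(v_1, v_2)$, and adding the two bounds yields $D \le (2j+1) d(v_1, v_2) \le (2k-1) d(v_1, v_2)$. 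This induction, keeping track of the alternating roles of $v_1$ and $v_2$ after each swap, is the one delicate step; the rest is bookkeeping on hash tables and sampling probabilities.
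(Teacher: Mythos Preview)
The paper does not supply its own proof of this theorem; it is stated as a citation to~\cite{TZ05} and used as a black box. So there is nothing in the paper to compare against beyond the reference itself.

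Your sketch is a faithful reconstruction of the original Thorup--Zwick argument: the random sampling hierarchy, bunches with perfect hashing, the alternating-pivot query, the geometric bound on $\mathbb{E}\,|B(v)|$, and the inductive stretch analysis are exactly the ingredients of~\cite{TZ05}. One small wobble: in your terminating step you bound $d(v_2,w)$ via $\delta_{j+1}(v_2)\le \delta_{j+1}(v_1)+d(v_1,v_2)$, but the loop invariant controls $\delta_j$ at the active vertex, not $\delta_{j+1}$; the clean way is to use the triangle inequality directly, $d(v_2,w)\le d(v_1,v_2)+d(v_1,w)\le (j+1)\,d(v_1,v_2)$, which together with $d(v_1,w)\le j\,d(v_1,v_2)$ gives the $(2j+1)$ bound. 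You already flag this step as the delicate one, and the fix is cosmetic. Also note that the query time in~\cite{TZ05} is $O(k)$, which is $O(1)$ only because $k$ is fixed here---exactly as you state.
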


    Now we show how one can combine Theorem~\ref{simple_oracle} and Theorem~\ref{thorup_zwick} and obtain the desired data structure.

    \begin{theorem}
        For every $\eps, \delta > 0$
        there exists an oracle of size $O(n^{1 + \delta})$ that given two vertices $v_1, v_2 \in V$ reports in constant time
        \begin{itemize}
            \item ``$\perp$'', if $d(v_1, v_2) < \eps n$;
            \item $d(v_1, v_2)$, if $d(v_1, v_2) \geq \eps n$.
        \end{itemize}
    \end{theorem}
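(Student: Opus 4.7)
The plan is to combine the two data structures: the Thorup--Zwick oracle of Theorem~\ref{thorup_zwick} will serve to decide which branch of the output applies, while the oracle from Theorem~\ref{simple_oracle}, built with a slightly tightened threshold, will supply the exact distance when needed. I would pick $k = \lceil 1/\delta \rceil$ so that the Thorup--Zwick oracle has size $O(n^{1 + 1/k}) = O(n^{1+\delta})$ and on query $(v_1, v_2)$ returns a value $D$ with $d(v_1, v_2) \le D \le (2k - 1)\cdot d(v_1, v_2)$. In parallel I would instantiate Theorem~\ref{simple_oracle} not with the threshold $\eps$ but with $\eps' := \eps / (2k - 1)$; this oracle still has size $O(n)$ and returns $\tau$ equal to $d(v_1, v_2)$ whenever $d(v_1, v_2) \ge \eps' n$.

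The answering rule I propose is: compute both $D$ and $\tau$ in constant time; output $\perp$ if $D < \eps n$, and otherwise output $\tau$ if $\tau \ge \eps n$ and $\perp$ if $\tau < \eps n$. I would justify this by a short case analysis. When $d(v_1, v_2) \ge \eps n$ we have $D \ge d(v_1, v_2) \ge \eps n$, so we enter the second branch; moreover $d(v_1, v_2) \ge \eps n \ge \eps' n$, so $\tau = d(v_1, v_2) \ge \eps n$ and the output is the exact distance. When $d(v_1, v_2) < \eps n$, either $D < \eps n$ and we immediately return $\perp$, or $D \ge \eps n$; in the latter sub-case $(2k - 1)\,d(v_1, v_2) \ge D \ge \eps n$ forces $d(v_1, v_2) \ge \eps' n$, so $\tau = d(v_1, v_2) < \eps n$ and we again return $\perp$.

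The only real subtlety I expect is the choice of the tightened threshold $\eps'$: a naive use of Theorem~\ref{simple_oracle} with threshold $\eps$ would leave an ambiguous window in which $d(v_1, v_2) < \eps n$ but $D \ge \eps n$, where the simple oracle has no obligation to return the exact value and the two oracles together could not resolve whether the true distance crosses $\eps n$. Once the threshold is shrunk by the factor $2k - 1$ this window disappears. The resulting space bound is $O(n^{1 + \delta}) + O(n) = O(n^{1 + \delta})$ and the query time is the sum of two constant-time lookups, hence constant.
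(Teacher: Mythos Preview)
Your proof is correct and follows essentially the same approach as the paper: choose $k=\lceil 1/\delta\rceil$, use the Thorup--Zwick oracle to rule out pairs with $d(v_1,v_2)<\eps' n$ where $\eps'=\eps/(2k-1)$, and then invoke Theorem~\ref{simple_oracle} with threshold $\eps'$ to obtain the exact distance and compare it to $\eps n$. Your version is simply more explicit about the answering rule and the case analysis than the paper's terse two-line argument.
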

    \begin{proof}
        First, we build a Thorup-Zwick oracle for $k = \lceil 1 / \delta \rceil$.
        Using it we can dismiss pairs $(v_1, v_2)$ with $d(v_1, v_2) < \eps n / (2k - 1)$.
        Now by using an oracle from Theorem~\ref{simple_oracle} with $\eps' = \eps / (2k - 1)$ we can determine the exact value of $d(v_1, v_2)$.
    \end{proof}

    \section{Metric embeddings}
    \label{metric_embeddings}

    In this section we state and prove a generic transformation of $\ell_1$-embeddability results of metric spaces to a setting, where we are concerned
    only about large distances.

    Let $G = (V, E, w)$ be a weighted undirected graph with $n$ vertices.
    Let $d$ and $\dunw$ be weighted and unweighted metric on $V$, respectively.
    Let $\eps > 0$ be some parameter.

    \begin{definition}
        Say that a mapping $f \colon V \to (\mathbb{R}^k, \ell_1)$ has $\eps$-distortion $D$, if 
        \begin{itemize}
            \item For every $v_1, v_2 \in V$
            $$
                \|f(v_1) - f(v_2)\|_1 \geq d(v_1, v_2); 
            $$
            \item For every $v_1, v_2 \in V$ such that $\dunw(v_1, v_2) \geq \eps n$
            $$
                \|f(v_1) - f(v_2)\|_1 \leq D \cdot d(v_1, v_2).
            $$
        \end{itemize}
    \end{definition}

    \begin{theorem}
        \label{black_box}
        Suppose that for every $k \leq n$ and for every subset $U \subseteq V$ of size $k$ there exists an embedding 
        $g_U \colon U \to (\mathbb{R}^{t(k)}, \ell_1)$ with distortion $D(k)$.

        Then for every $\eps > 0$ there exists an embedding $f \colon V \to (\mathbb{R}^{t(O(\log(1 / \eps) / \eps)) + O(\log n)}, \ell_1)$ with $\eps$-distortion
        $O(D(O(\log(1 / \eps) / \eps)))$.
    \end{theorem}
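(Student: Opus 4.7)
My plan is to construct $f$ as the concatenation of two blocks: a ``global'' block $f_1$ obtained by composing $g_U$ with a nearest-neighbor projection onto $U$, and a ``local'' block $f_2$ of $O(\log n)$ dimensions built from a binary code, whose role is to rescue the lower bound exactly in the cases where $f_1$ collapses.

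First I apply Theorem~\ref{hitting_set} with parameter $\eps$ to obtain $U \subseteq V$ of size $k = O(\log(1/\eps)/\eps)$, and invoke the hypothesis to get $g_U \colon U \to \mathbb{R}^{t(k)}$ of distortion $D(k)$. For each $v \in V$ I fix a closest point $u^*_v \in U$ under $d$ (taking $u^*_v = v$ when $v \in U$), and set $f_1(v) := g_U(u^*_v)$. For the second block I choose a code $\chi \colon V \to \{-1,+1\}^m$ with $m = O(\log n)$ and pairwise Hamming distance at least $c_0 m$ for some fixed $c_0 > 0$; a random $\chi$ works with high probability once $m$ is a sufficiently large multiple of $\log n$. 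I then set $f_2(v) := (c_0 m)^{-1} d(v, u^*_v) \chi(v)$, so that $f := (f_1, f_2)$ lies in $\mathbb{R}^{t(k) + O(\log n)}$.

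For the upper bound on a pair with $\dunw(v_1, v_2) \geq \eps n$, Theorem~\ref{hitting_set} gives $u \in U$ with $d(v_1,v_2) = d(v_1, u) + d(u, v_2)$, which immediately yields $d(v_i, u^*_{v_i}) \leq d(v_1, v_2)$ and, by the triangle inequality, $d(u^*_{v_1}, u^*_{v_2}) \leq 3 d(v_1, v_2)$. Hence $\|f_1(v_1) - f_1(v_2)\|_1 \leq 3 D(k) d(v_1, v_2)$, while a coordinate-wise estimate gives $\|f_2(v_1) - f_2(v_2)\|_1 \leq (2/c_0) d(v_1, v_2)$, so summing yields $O(D(k)) d(v_1, v_2)$. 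For the lower bound on all pairs, non-contraction of $g_U$ gives $\|f_1(v_1) - f_1(v_2)\|_1 \geq d(u^*_{v_1}, u^*_{v_2})$, and a case analysis on the signs of $\chi(v_1), \chi(v_2)$ combined with the code's pairwise distance $\geq c_0 m$ yields $\|f_2(v_1) - f_2(v_2)\|_1 \geq d(v_1, u^*_{v_1}) + d(v_2, u^*_{v_2})$. Applying the triangle inequality on the chain $v_1 \to u^*_{v_1} \to u^*_{v_2} \to v_2$ then delivers $\|f(v_1) - f(v_2)\|_1 \geq d(v_1, v_2)$.

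The main obstacle is calibrating $f_2$ so that it contributes a lower bound of $d(v_1, u^*_{v_1}) + d(v_2, u^*_{v_2})$ --- precisely the slack needed in the problematic case $u^*_{v_1} = u^*_{v_2}$, where $f_1$ contributes nothing --- while remaining $O(1)$-Lipschitz (with constant depending only on $c_0$) so as not to inflate the total expansion beyond $O(D(k))$ on large pairs. The $\pm 1$-valued code with constant relative distance, scaled by $1/(c_0 m)$, is designed to provide exactly this balance: differing coordinates contribute $d_1+d_2$ each, while the worst per-coordinate expansion is at most $\max(d_1, d_2)$.
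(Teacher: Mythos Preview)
Your proof is correct and follows essentially the same two-block decomposition as the paper: a first block $g_U \circ (\text{nearest point in }U)$ and a second $O(\log n)$-dimensional block that embeds the ``star'' quantity $d(v_1,p(v_1)) + d(v_2,p(v_2))$ with $O(1)$ distortion, after which the lower and upper bounds are verified by the same triangle-inequality chain. The only difference is that the paper invokes as well-known the $O(1)$-distortion $\ell_1^{O(\log n)}$ embedding of this star quantity, whereas you construct it explicitly via a $\pm 1$ code of linear minimum distance scaled by $d(v,u^*_v)/(c_0 m)$; your coordinate-wise analysis indeed gives $\|f_2(v_1)-f_2(v_2)\|_1 \in [d_1+d_2,\; (d_1+d_2)/c_0]$ for $v_1\neq v_2$, which is exactly what is needed.
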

    \begin{proof}
        Let $U \subseteq V$ be a set of size $k := O(\log(1 / \eps) / \eps)$ from Theorem~\ref{hitting_set}.
        There exists an embedding $g \colon U \to (\mathbb{R}^{t(k)}, \ell_1)$ with distortion $D(k)$.

        For a vertex $v \in V$ let $p(v) \in V$ be the closest to $v$ vertex from $U$ (with respect to $d$).
        Consider the metric $d'(v_1, v_2) := d(v_1, p(v_1)) + d(v_2, p(v_2))$. It is well known that such metrics
        are embeddable into $(\mathbb{R}^{O(\log n)}, \ell_1)$ with distortion $O(1)$.
        Let $h \colon V \to (\mathbb{R}^{O(\log n)}, \ell_1)$ be a corresponding embedding.

        Let us prove that a mapping $f \colon v \mapsto g(p(v)) \oplus h(v)$
        has $\eps$-distortion $O(D(k))$.

        First, for every $v_1, v_2 \in V$
        \begin{eqnarray*}
            \|f(v_1) - f(v_2)\|_1 = \|g(p(v_1)) - g(p(v_2))\|_1 + \|h(v_1) - h(v_2)\|_1 \geq \\ \geq
            d(p(v_1), p(v_2)) + d'(v_1, v_2) = d(v_1, p(v_1)) + d(p(v_1), p(v_2)) + d(p(v_2), v_2) \geq \\ \geq
            d(v_1, v_2).
        \end{eqnarray*}

        Second, suppose $\dunw(v_1, v_2) \geq \eps n$. Then, by definition of $U$, there exists $u \in U$
        such that $d(v_1, v_2) = d(v_1, u) + d(u, v_2)$.
        Thus,
        \begin{eqnarray*}
            \|f(v_1) - f(v_2)\|_1 = \|g(p(v_1)) - g(p(v_2))\|_1 + \|h(v_1) - h(v_2)\|_1 \leq \\ \leq
            D(k) \cdot d(p(v_1), p(v_2)) + O(1) \cdot d'(v_1, v_2) \leq \\ \leq
            O(D(k)) \cdot (d(p(v_1), p(v_2)) + d(v_1, p(v_1)) + d(v_2, p(v_2))) \leq \\ \leq
            O(D(k)) \cdot (d(v_1, v_2) + d(v_1, p(v_1)) + d(v_2, p(v_2))) \leq \\ \leq
            O(D(k)) \cdot (d(v_1, v_2) + d(v_1, u) + d(v_2, u)) = 
            O(D(k)) \cdot d(v_1, v_2).
        \end{eqnarray*}
    \end{proof}

    By combining Theorem~\ref{black_box} and Theorem~\ref{l1_embeddings} we obtain the following corollary.
    \begin{corollary}
        There are the following upper-bounds on $\eps$-distortion $D$:
        \begin{itemize}
            \item For any $d$ one can take $D = O(\log (1 / \eps))$;
            \item If $G$ is $H$-minor free, then $D = O(\sqrt{\log (1 / \eps)})$; 
            \item If $d$ has bounded doubling dimension, then $D = O(\sqrt{\log (1 / \eps)})$;
            \item If $d$ is of negative type, then $D = O(\sqrt{\log (1 / \eps)} \log \log (1 / \eps))$. 
        \end{itemize}
    \end{corollary}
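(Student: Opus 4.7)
The plan is to apply Theorem~\ref{black_box} in black-box fashion, feeding each of the four distortion bounds from Theorem~\ref{l1_embeddings} in as the function $D(k)$. Recall that Theorem~\ref{black_box} takes a family of $\ell_1$-embeddings of $k$-point sub-metrics with distortion $D(k)$ and returns an $\ell_1$-embedding of the full $V$ with $\eps$-distortion $O(D(k))$ for $k = O(\log(1/\eps)/\eps)$. So only two things are needed: verify the hypothesis ``distortion $D(k)$ on every $k$-point sub-metric'' holds in each case, and simplify the resulting bound at this $k$.

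First I would verify the inheritance of each of the four hypotheses by an arbitrary $k$-point sub-metric on $U \subseteq V$. Bourgain's theorem applies to every finite metric, so $D(k) = O(\log k)$ with no extra work. Negative type is trivially inherited by restriction (a Euclidean embedding of $\sqrt d$ restricts to one of $\sqrt{d|_{U\times U}}$), yielding $D(k) = O(\sqrt{\log k}\,\log\log k)$. Bounded doubling dimension is preserved up to a constant factor under restriction, yielding $D(k) = O(\sqrt{\log k})$. For the $H$-minor-free case the property is graph-theoretic rather than purely metric, so a small additional argument is needed to produce $D(k) = O(\sqrt{\log k})$ for the induced metric on $U$ (either by a sub-metric version of Rao's theorem, or by appealing to the stronger result of~\cite{ABCDGKNS05}).

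Finally I would substitute $k = O(\log(1/\eps)/\eps)$ and use the elementary estimate
\[
    \log k = \log\!\bigl(\log(1/\eps)/\eps\bigr) + O(1) = O(\log(1/\eps)),
\]
since $\log\log(1/\eps)$ is dominated by $\log(1/\eps)$. This sends $O(\log k)$ to $O(\log(1/\eps))$, $O(\sqrt{\log k})$ to $O(\sqrt{\log(1/\eps)})$, and $O(\sqrt{\log k}\,\log\log k)$ to $O(\sqrt{\log(1/\eps)}\,\log\log(1/\eps))$, producing exactly the four bullets of the corollary once the outer $O(\cdot)$ from Theorem~\ref{black_box} is absorbed. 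The only step that is not pure bookkeeping is the $H$-minor-free inheritance noted above; everything else follows by mechanical substitution.
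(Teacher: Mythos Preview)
Your proposal is correct and follows exactly the paper's approach: the paper simply states that the corollary is obtained ``by combining Theorem~\ref{black_box} and Theorem~\ref{l1_embeddings}'' with no further detail. Your write-up is in fact more careful than the paper, since you explicitly check that each hypothesis is inherited by sub-metrics and flag the $H$-minor-free case as the one step requiring a non-trivial justification---a subtlety the paper glosses over entirely.
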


    \bibliographystyle{alpha}
    \bibliography{../bibtex/ir}
\end{document}